\def\maxwidth{ %
  \ifdim\Gin@nat@width>\linewidth
    \linewidth
  \else
    \Gin@nat@width
  \fi
}
\definecolor{fgcolor}{rgb}{0.345, 0.345, 0.345}
\definecolor{shadecolor}{rgb}{.97, .97, .97}
\definecolor{messagecolor}{rgb}{0, 0, 0}
\definecolor{warningcolor}{rgb}{1, 0, 1}
\definecolor{errorcolor}{rgb}{1, 0, 0}
\newenvironment{knitrout}{}{} 
\theoremstyle{plain}
\theoremstyle{definition}
\theoremstyle{remark}
\newcommand{\dalphaU}{d^{max}_\alpha}
\newcommand{\dalphaB}{d^{min}_\alpha}
\newcommand{\dstar}{d^*}
\newcommand{\dhatU}{d^{max}}
\newcommand{\dhatB}{d^{min}}
\newcommand{\dhatm}{\hat{d}_M}
\newcommand{\dhatmab}{\hat{d}^{ab}_M}
\newcommand{\EE}{E}
\newtheorem{prop}{Proposition}
\newtheorem{assumption}{Assumption}
\newenvironment{assumptionp}[1]{
  
  \assumptionalt
}{\endassumptionalt}
\DeclareMathOperator*{\argmax}{arg\,max}
\DeclareMathOperator*{\argmin}{arg\,min}
\providecommand{\keywords}[1]
{
  \small
  \textbf{\textit{Keywords---}} #1
}
\title{Sequential Specification Tests to Choose a Model: A Change-Point Approach}
\author{Adam C Sales\thanks{The author received valuable input from Atul Mallik, George Michailidis, Ben B. Hansen, James E. Pustejovsky, and an anonymous reviewer}}
\date{%
Worcester Polytechnic Institute\\
Department of Mathematical Sciences\\
Worcester, Massachusetts, USA\\
asales@wpi.edu}
\begin{document}
\maketitle

\begin{abstract}
  A researcher choosing between models ordered by some criterion may seek the best specification that satisfies a testable assumption. In this scenario, sequential specification tests (SSTs) are hypothesis tests of that assumption for each model in the sequence. We introduce a method using the p-values from SSTs to estimate the point in the sequence where the assumption ceases to hold. Unlike alternative approaches, this method is robust to individual errant p-values and does not require choosing a test level or tuning parameter. We demonstrate the method's properties with a simulation study, and illustrate it by choosing a bandwidth in a regression discontinuity design and a lag order for a time series model.\\

  \keywords{Model Selection; Time Series; Regression Discontinuity Designs}
\end{abstract}

\section{Introduction}

Null hypothesis tests and p-values play a central role in model checking.
In this context, the null hypothesis may be that that the data are
drawn from a distribution contained in the the model under study, or
it may be derived from an underlying assumption.
Typically, researchers use these specification tests to check the fit
of a model chosen by other means, but in some cases hypothesis tests
form the basis of a model selection procedure.
In these cases, researchers construct a sequence of model
specifications, ordered by preferability, and test each one.
The best model whose assumptions ``pass'' the hypothesis test is chosen.

For example, take the datasets displayed in Figure \ref{fig:example},
which will be discussed in more detail in Section \ref{sec:examples}.
Figure \ref{fig:example}A shows the annual total unemployment rate in the
United States from 1890 to 2015.
One of the simpler models for time series such as these is an order
$p$ autoregression, or $AR(p)$, under which the value of the time
series at point $t$ may depend on its historical values at
$t-1,...,t-p$ but, conditional on those, is independent of values at
points before $t-p$.
To choose the order $p$, researchers may test model fit for a
sequence of lag orders $p$, and choose the smallest $p$ that the tests
fail to reject.
Here a smaller lag orders $p$ are preferable because they lead to more
parsimonious models and more precise estimates.

Figure \ref{fig:example}B plots data that \citet{lso} used to estimate the effect of academic
probation on college students' subsequent grade point averages.
University students were put on academic
probation if their first-year cumulative grade point averages
fell below a cutoff.
This is an example of a regression discontinuity design \citep[RDD;][]{thistlewhiteCampbell}, in
which treatment is assigned if a
numeric ``running variable'' $R$ falls below (or above) a
pre-specified cutoff $c$.
Since treatment assignment is entirely a function of $R$ and $c$,
researchers can model the relationship between $R$ and an outcome
variable $Y$ in
order to estimate the effect of the treatment without confounding.
A common tool for ensuring that RDD models are well specified is to
limit the data analysis sample to subjects with $R\in\{c-\omega,c+\omega\}$,
where $\omega>0$ is a bandwidth selected by the data analyst.
One method for choosing $\omega$ relies on subjects' baseline
covariates: researchers will estimate ``effects'' of the treatment on
baseline covariates using data from subjects with $R$ within $\omega$ of $c$.
Since the treatment cannot possibly have an effect on baseline
covariates, any estimated effects are due to model misspecification or
an overly-large choice of $\omega$.
Following this reasoning, some methodologists recommend testing for
effects on covariates using an array of candidate bandwidths, and
choosing the largest bandwidth within which the null hypothesis of no
effect cannot be rejected.
The bandwidth tradeoff is similar to the $AR(p)$ case: if $\omega$ is too
large, the causal model might be misspecified and the effect estimate
will be biased. If $\omega$ is too small, there will not be enough data to
precisely estimate the effect of interest.

These are both examples of the use of sequential specification tests
(SSTs) to choose a model.
SSTs are also used in covariate selection for regression models
\citep{greene2003econometric}, selecting the number of components in mixture models,
latent class analysis, and factor analysis \citep{nylund2007deciding} and in
propensity-score matching \citep{hansen2015}.

Do hypothesis tests make any sense in model selection?
The results of a null hypothesis test, of course, are never evidence
in favor of a null hypothesis; null hypotheses can only be rejected,
not accepted.
Along similar lines, the logic of controlling type-I error rates seems backwards when it comes to
model selection, in which accepting a
problematic specification---a type II error---is the major concern.
These issues have prompted some methodologists \citep[e.g.][]{cft} to
propose adjusting the size of specification tests to a value higher
than the conventional $\alpha=0.05$.
However, the appropriate value for $\alpha$, and the
criteria for selecting $\alpha$, remain unclear.

On the other hand, a conceptually-sound model-selection method based
on SSTs would be particularly useful;
specification tests already exist for most common models, and
they are regularly taught in introductory quantitative methods
classes.

\begin{figure}

\includegraphics[width=\maxwidth]{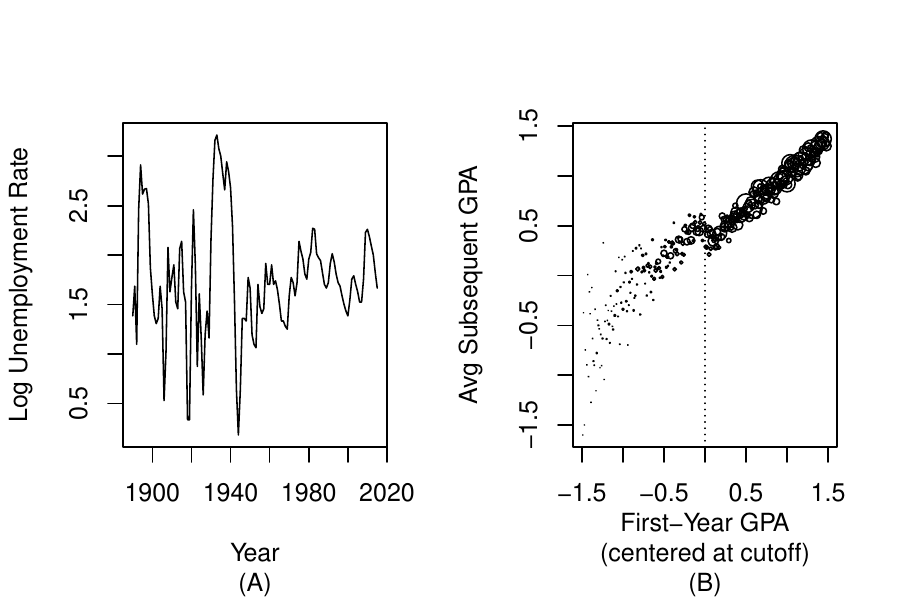}

\caption{Plot (A) shows a
  time-series of log annual United States total unemployment from 1890 to
  2015. Data were combined from \citet{urca} and \citet{cps}. Plot (B) shows data from
  \citet{lso}: average subsequent grade point averages (GPAs), as a function of first-year
  GPAs,
  centered at the academic probation cutoff (dotted line).
  The points are sized proportionally to the number of students
  with each first-year GPA.}
\label{fig:example}
\end{figure}

This paper develops such a method, based on a clever idea in change-point or threshold
estimation.
\citet{mallik} points out that in a process with a change point,
the p-values from a sequence of tests of a null regression function
are uniformly-distributed as long as the regression function is
correct, but asymptotically zero when the function is not correct.
They use this dichotomous behavior to construct a simple, consistent
estimator of the change-point, that is, the point at which the null model
stops being correct.

We adapt that idea to the case of SSTs, choosing the change-point in
a sequence of models, i.e. the point when models stop being correct.
Our change-point estimator is based on the entire sequence of p-values, so that (unlike under current approaches)
an individual outlier p-value will not drive its conclusions.
What's more, unlike other SST model selectors, the change-point
approach does not require the researcher to specify a level $\alpha$
or any other tuning parameter.
This approach shifts the model selection rationale away
from the logic of hypothesis testing, based on type-I and type-II
error rates, and towards the logic of estimation.

Model selection is a very broad field in statistics with a rich history and---since an appropriate model must be chosen before any data analysis can proceed---it is centrally important.
A nice overview can be found in \citet{rao2001model}\footnote{This very helpful citation was suggested by an anonymous reviewer.}, who state that ``[a]lmost all statistical problems can be considered as model selection problems'' (p. 3).
That monograph focuses on regression problems, and describes model selection based on hypothesis tests (including SSTs) as well as a range of methods that seek to optimize prediction errors, information criteria, or posterior probability for a range of different scenarios.
This paper is both more broad and much more restricted in its focus.
It is restricted to model selection based on hypothesis tests (more specifically, p-values), and in the particular case in which the researcher is choosing between a sequence of models, ranked by preference.
We are not presenting any new hypothesis tests (unlike, e.g. \citealt{vuong1989likelihood}) or any other way to decide between two competing models (e.g. \citealt{schwarz1978estimating})---instead, we assume that a hypothesis test already exists.
Our goal is to provide guidance on how to use p-values from that test to choose between models in a sequence.
On the other hand, the method we are introducing will apply in any scenario where such a hypothesis test is available.

\section{Background}\label{sec:setup}
\subsection{Sequences of Models and Tests}
Say, in specifying a model, a researcher must choose from a discrete set of specifications $\mathcal{S}_d$; $d=1,2,\dots,D$.
In this setup, we consider the set of $D$ candidate specifications as fixed, and not dependent on sample size.
The models are ordered by preference, subject to a testable assumption operationalized as $H_{0d}$; that is,
if $d>d'$, and $H_{0d}$ holds for both $\mathcal{S}_d$ and $\mathcal{S}_{d'}$,
then $\mathcal{S}_d$ is preferable to $\mathcal{S}_{d'}$.
The assumption $H_{0d}$ can have the same form for each $d$ (e.g. model $\mathcal{S}_d$ fits the data) or can be relative (e.g. model $\mathcal{S}_d$ fits the data as well as $\mathcal{S}_{d-1}$).
Denote the optimal specification choice as $\mathcal{S}_{\dstar}$, where
\begin{equation*}
  \dstar=\begin{cases}
  max\{d\in [1,D] : H_{0d}\text{ is true}\} &\text{ if }H_{0d}\text{ is true for some }d\in[1,D]\\
  0 & \text{otherwise}
  \end{cases}
\end{equation*}
Then we assume the following:
\begin{assumption}\label{ass:orderedH}
  If $d_2>d_1$, then $H_{0d_2}$ implies $H_{0d_1}$
\end{assumption}
That is, either $H_{0d}$ is false for all $d\in [1,D]$, in which case $\dstar=0$, $H_{0d}$ is true for all $d\in [1,D]$, in which case $\dstar=D$, or $H_{0d}$ is true for $1\le d\le \dstar$ and $H_{0d}$ is false for $\dstar<d \le D$.

Now assume the researcher has chosen a testing procedure for $H_{0d}$ producing a p-value for each $H_{0d}$, $p_1,\dots,p_D$.
Let $n_d$ be the sample size included in the test for $H_{0d}$.
This can take a number of forms---for instance, in the unemployment example of Figure \ref{fig:example}A, as $d$ increases the model becomes more parsimonious, but the data used to fit the model stays the same, so $n_d=n$ for all $d$.
In contrast, in the RDD example of Figure \ref{fig:example}B, a specification with larger $d$ includes more observations, so $d_2>d_1$ implies $n_2\ge n_1$.
When a dataset has a multilevel or hierarchical structure, the concept of a sample size becomes more subtle.
For instance, in a clustered survey sample $n_d$ may refer to the number of clusters included in a hypothesis test, rather than the number of observations.
$n_d$ plays no direct role in the computations we will introduce, so for the sake of generality we can afford to be slightly vague about its meaning.
When we wish to emphasize the dependence of $p_d$ on $n_d$, we write it as $p_{nd}$.
In any event, we assume that the test of each $H_{0d}$ is valid and asymptotically powerful.
That is, we assume:
\begin{assumption}\label{ass:pvals}
  For each $d=1,\dots,D$,
  \begin{itemize}
  \item If $d\le\dstar$, $p_{nd}\sim \mathcal{U}(0,1)$
  \item If $d>\dstar$, $p_{nd}\rightarrow_p 0$ as $n_d\rightarrow \infty$
  \end{itemize}
\end{assumption}
so that when $H_{0d}$ is true, the associated p-value is uniformly distributed, and when $H_{0d}$ is false the resulting p-value should be small in large samples.
In some cases, it will  suffice to substitute the following weaker form of Assumption \ref{ass:pvals}:
\begin{assumptionp}{\ref{ass:pvals}$'$}\label{ass:pvalsWeak}
  For each $d=1,\dots,D$, as $n_d\rightarrow \infty$,
\begin{itemize}
  \item If $H_{0d}$ is true, $\EE[p_{nd}]\rightarrow 1/2$
  \item If $H_{0d}$ is false, $p_{nd}\rightarrow_p 0$
  \end{itemize}
\end{assumptionp}
The goal here is to use $\bm{p}_D$ to choose a specification
$\hat{d}$ that is as large as possible without violating
the specification assumption encoded in $H_{0d}$.

For the methods we describe here and in the following section, it is
not necessary for the p-values $\bm{p}_D$ to be mutually independent;
indeed, they typically are not.


\subsection{Two Existing Approaches to Sequential Specification Tests}

In the contexts of sequential specification tests,
\citet{rao2001model} suggests a common approach to choosing a specification: for a pre-specified $\alpha \in (0,1)$, let
\begin{equation*}
  \dalphaU \equiv \begin{cases} \max\{d : p_d\ge\alpha\} & \text{if }\max{p_d}\ge\alpha\\
    0 & \text{otherwise}
    \end{cases}
\end{equation*}
That is, $\dalphaU$ is the largest value of $d$ for which $H_{0d}$
cannot be rejected at level $\alpha$ (and 0 if no such $d$ exists).
Although it may seem as though multiplicity corrections may be
necessary here, it turns out that this is not the case.
The ``stepwise intersection-union
principle'' \citep{berger1988, rosenbaum2008,hansen2015} insures that the
family-wise error rate is maintained, and the probability of falsely rejecting any null hypothesis $H_{0d}$ is bounded by $\alpha$:
\begin{prop}[\citealt{hansen2015}]
  Under Assumptions \ref{ass:orderedH} and \ref{ass:pvals}, $pr(\dalphaU<\dstar)\le \alpha$
\end{prop}
The proof can be found in \citet{hansen2015}.
Informally, note that (1) if $\dstar>0$, selecting $\dalphaU<\dstar$ entails rejecting $H_{0\dstar}$, (2) that $H_{0\dstar}$ is true, and (3) that
under Assumption 2, the probability of rejecting a true null at level $\alpha$ is equal to $\alpha$.
If $\dstar=0$, then $\dalphaU<\dstar$ is impossible by definition.
$\dalphaU$ is the specification that would result from testing null
hypotheses backwards: for $d'=D,D-1,\dots,d,\dots,1$, test $H_{0d'}$,
and stop testing at $d'=\dalphaU -1$, the first $d'$ for which
$p_{d'} \ge \alpha$.

Another common choice for $\hat{d}$  \citep[e.g.][]{lutkepohl2005new}
does not have this property.
Let
\begin{equation*}
\dalphaB\equiv min\{d: p_d<\alpha\}-1
\end{equation*}
$\dalphaB$ selects $\hat{d}$ to be the largest value of $d$ before the first
significant p-value ($\dalphaB=0$ if all p-values are $<\alpha$).
This is equivalent to the opposite procedure as $\dalphaU$: start with the $d'=1$
and test sequentially for larger values of $d'$ until the first
rejection, at $\dalphaB$, then stop; reject all null
hypotheses $H_{0d'}$ for $d'\ge \dalphaB$ and fail to
reject the rest.
This procedure does not control family-wise error rates, so it is likely
to reject more than $100\alpha$\% valid specifications.

Both $\dalphaU$ and $\dalphaB$ require the researcher to choose a rejection level $\alpha$ in advance, typically without much guidance or motivation.
Also, both procedures are susceptible to outlier p-values: an errant p-value exceeding $\alpha$ for a large value of $d$ will cause $\dalphaU$ to be too large, and an errant low p-value for a low value of $d$ will cause $\dalphaB$ to be too small.


In the following section, we will suggest an alternative way to select a specification from a sequence of p-values that avoids these pitfalls by relying on the logic of estimation instead of the logic of hypothesis testing.

\section{The Change Point Estimator}

The asymptotic behavior described in Assumption \ref{ass:pvalsWeak}---when $d\le \dstar$, $\EE p_{dn}=1/2$ and when $d>\dstar$, $p_{dn}\rightarrow_p 0$---suggests a least-squares estimator for $\dstar$:
\begin{equation}\label{eq:dhatm}
\dhatm\equiv \displaystyle\argmin_{d} \displaystyle\sum_{t\le d} (p_t -1/2)^2 +
\displaystyle\sum_{t>d} p_t^2=\displaystyle\argmax_d \displaystyle\sum_{t\le d} (p_t-1/4).
\end{equation}
In other words, the estimate $\dhatm$ is the point at which the
p-values cease behaving as p-values testing a true null, with mean
$1/2$, and instead are drawn from a distribution with a lower mean.

Because p-values are inherently unpredictable when the null hypothesis is true, even in large samples $\dhatm$ may choose a sub-optimal specification---i.e. $pr(\dhatm<\dstar)>0$ for all $n$.
In particular, note that under Assumption \ref{ass:pvals}, $pr(p_{\dstar}<1/4)=1/4$ regardless of $n_d$, and that when $p_{\dstar}-1/4<0$, $\dhatm \neq \dstar$, because
$\sum_{d\le\dstar-1} (p_d-1/4)>\sum_{d\le \dstar} (p_d-1/4)$.
However, $\dhatm$ is asymptotically conservative:
\begin{prop}\label{prop:conservative}
Under Assumptions \ref{ass:orderedH} and \ref{ass:pvalsWeak},
$pr(\dhatm>\dstar)\rightarrow 0$ as $\displaystyle\min_{1\le d\le D}\{n_d\}\rightarrow \infty$.
\end{prop}
\begin{proof}
For each $d>\dstar$, $pr(p_d -1/4>0)\rightarrow 0$, implying that for all $d'$, $pr(\sum_{\dstar <t\le d'}
(p_t-1/4)>0)\rightarrow 0$.
Therefore, for $\dstar<d\le D$, $pr(\sum_{t\le d} (p_t-1/4)> \sum_{t\le
  \dstar} (p_t-1/4))\rightarrow 0$.
\end{proof}
If $\dhatm>\dstar$, then specification $\mathcal{S}_{\dhatm}$
violates the assumption encoded in $H_{0d}$;
as sample size increases, the probability of this event decreases to
zero.
The same property holds for $\dalphaU$, with $\alpha>0$ fixed, for
the same reason.

In a way, $\dhatm$ is similar to $\dhatU_{0.25}$, the largest $d$ for
which $p_d>\alpha=0.25$, because both penalize p-values lower than
$0.25$.
However, they are not equivalent, as the following proposition shows:
\begin{prop}\label{prop:d25}
$\dhatm \le \dhatU_{0.25}$, with $pr(\dhatm < \dhatU_{0.25})>0$.
\end{prop}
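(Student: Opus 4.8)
The plan is to phrase everything through the cumulative sums $S_d\equiv\sum_{t\le d}(p_t-1/4)$ for $d=0,1,\dots,D$, with $S_0=0$ by convention, so that by \eqref{eq:mallikSimple} $\dhatm=\arg\max_{0\le d\le D}S_d$ while $\dhatU_{0.25}=\max\{d:p_d>1/4\}$ (and $\max\emptyset\equiv 0$). The one elementary fact I would lean on is that the one-step increment of this process is $S_d-S_{d-1}=p_d-1/4$.

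For the inequality $\dhatm\le\dhatU_{0.25}$, I would argue as follows. Suppose $d=\dhatm\ge 1$. Maximality forces $S_d\ge S_{d-1}$, hence $p_{\dhatm}\ge 1/4$. Since the $p_d$ are continuously distributed, exact ties among the finitely many $S_d$ occur with probability zero, so we may take $\dhatm$ to be the unique maximizer; then $S_{\dhatm}>S_{\dhatm-1}$, i.e.\ $p_{\dhatm}>1/4$, so $\dhatm\in\{d:p_d>1/4\}$. A member of a set cannot exceed its maximum, whence $\dhatm\le\dhatU_{0.25}$ almost surely. The only boundary case, $\dhatm=0$, occurs exactly when every $p_d<1/4$, and is consistent with $\dhatU_{0.25}=\max\emptyset=0$. (Equivalently, if $\arg\max$ is defined to return the smallest maximizer, the strictness $S_{\dhatm}>S_{\dhatm-1}$ is automatic and no continuity assumption is needed.)

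For $Pr(\dhatm<\dhatU_{0.25})>0$ --- which requires $D\ge 2$, since for $D=1$ the two estimators always coincide --- it suffices to exhibit a single positive-probability configuration of $\ps$ on which the inequality is strict. I would use the event $E$ on which $p_1\in(0,\epsilon)$ while each of $p_2,\dots,p_D$ lies in $(1/4,\,1/4+\epsilon)$, with $\epsilon$ chosen small enough that $D\epsilon<1/4$. On $E$ we have $p_D>1/4$, so $\dhatU_{0.25}=D$; but for every $d\ge 1$, $S_d=(p_1-1/4)+\sum_{t=2}^{d}(p_t-1/4)<(\epsilon-1/4)+(D-1)\epsilon=D\epsilon-1/4<0=S_0$, because the lone early deficit $p_1-1/4$ is never recouped by the small positive increments that follow; hence $\dhatm=0<D=\dhatU_{0.25}$ on $E$. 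Finally $Pr(E)>0$ because under the maintained assumptions the joint law of $\ps$ has a density that is strictly positive on such a box --- for instance, when specifications $1,\dots,D$ are all valid the $p_d$ are i.i.d.\ $U(0,1)$ and $Pr(E)=\epsilon^D$.

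The main difficulty is not any computation but bookkeeping at the boundaries: one must fix consistent conventions for ties in the $\arg\max$ and for $\max\emptyset$, since under a mismatched convention --- e.g.\ restricting the $\arg\max$ to $\{1,\dots,D\}$ while still allowing $\dhatU_{0.25}=0$ --- even the first inequality can fail at $D=2$. A secondary point is to record the minimal regularity used in the second part: the joint law of $(p_1,\dots,p_D)$ should assign positive mass to every open sub-box of $(0,1)^D$, which holds whenever the specifications in play are valid (those $p_d$ being uniform) and the remaining test statistics have full support.
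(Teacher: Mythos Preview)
Your argument is correct, and it reaches the same conclusion as the paper through a mildly different route. For the inequality $\dhatm\le\dhatU_{0.25}$, the paper argues ``forward'' from $\dhatU_{0.25}$: since $p_d<1/4$ for every $d>\dhatU_{0.25}$, each partial sum $\sum_{t\le d'}(p_t-1/4)$ with $d'>\dhatU_{0.25}$ is strictly smaller than $\sum_{t\le\dhatU_{0.25}}(p_t-1/4)$, so the $\arg\max$ cannot lie to the right of $\dhatU_{0.25}$. You instead argue ``locally'' at $\dhatm$: maximality forces the last increment $p_{\dhatm}-1/4$ to be positive, placing $\dhatm$ inside the set $\{d:p_d>1/4\}$ and hence below its maximum. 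Both are one-line consequences of the same cumulative-sum representation; your version has the small advantage of not needing to track anything beyond the single index $\dhatm$, while the paper's version makes transparent why \emph{every} $d>\dhatU_{0.25}$ is ruled out simultaneously. For the strict-inequality part, the paper simply names a sufficient condition (e.g.\ $p_{\dhatU_{0.25}-1}+p_{\dhatU_{0.25}}<1/2$) without verifying it has positive probability, whereas you construct an explicit open box in $(0,1)^D$ and invoke the uniformity of valid p-values; your treatment is more complete here, and your attention to tie-breaking and the $\max\emptyset$ convention is more careful than the paper's.
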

\begin{proof}
By definition, $p_d<0.25$ for all $d>\dhatU_{0.25}$. Therefore,
$\sum_{t=\dhatU_{0.25}+1}^{d'}(p_t-1/4)<0$ for all $d'\ge
\dhatU_{0.25}+1$, which in turn implies that $\sum_{t\le
  \dhatU_{0.25}}(p_t-1/4)>\sum_{t\le d'}(p_t-1/4)$, proving that
$\dhatm\le \dhatU_{0.25}$. On the other hand, if, say,
$p_{\dhatU_{0.25}-1}+p_{\dhatU_{0.25}}<1/2$, or, more generally,
$\sum_{t=d'}^{\dhatU_{0.25}}(p_t-1/4)<0$, then $\dhatm<\dhatU_{0.25}$.
\end{proof}

In general, the difference between $\dalphaU$ and $\dhatm$ will be
most pronounced when the distributions of p-values for $d>\dstar$ are
not monotonically decreasing in probability. In such a scenario, it
is most probable that an errant p-value for $d>>\dstar$ will be
greater than $\alpha$; one p-value determines $\dalphaU$, but
$\dhatm$ relies on the entire set of p-values.

\subsection{A More Flexible $\dhatm$}
In finite samples, p-values from tests of false null hypotheses will typically be greater than zero.
Similarly, many hypothesis tests are asymptotic and may not yield
uniformly-distributed p-values in finite samples.
Still, p-values from sequential specification tests may exhibit something similar to the
dichotomous behavior that motivates $\dhatm$, in which p-values for
$d\le \dstar$ are distributed differently than p-values for
$d>\dstar$.
For this reason, \citet{mallik} suggested a more flexible estimate:
\begin{equation*}
  \dhatmab \equiv \displaystyle\argmin_{d; 0<b<a<1}
  \displaystyle\sum_{t\le d} ( p_t
  -a)^2+\displaystyle\sum_{t>d} (p_t-b)^2
\end{equation*}
Like $\dhatm$, model selector $\dhatmab$ looks for behavior that
differs between p-values testing true and false null hypotheses.
Unlike $\dhatm$, it does not depend on theoretically established
distributions for these p-values, but searches over a grid for their
location parameters.
$\dhatmab$ will be more computationally expensive to compute than
$\dhatm$, but will may yield better results, especially in small
samples.


\section{A Simulation Study}\label{sec:simulation}

This section will present a simulation study to compare the
behavior of model selectors $\dalphaU$, $\dalphaB$, and
$\dhatm$.
The simulation imagines a sequence of 10 models, ordered from
least to most preferable.
The first 5 models are well specified; thereafter the models
are increasingly misspecified following a linear gradual change model \cite[c.f.][]{vogt2015detecting,shao2016detection}.
Each model is assessed with a $Z$-test.
For models $d=1,\dots,$5, the test statistic
$Z_d\sim\mathcal{N}(0,1)$, the standard normal distribution.
For models $d=$6,$\dots$,10, the test statistic is
distributed as $Z\sim\mathcal{N}\big\{\beta(d-5),1\big\}$, where
the slope parameter $\beta$ controls the power of specification tests for
these misspecified models,
which increases with $d$ for values of $d>$5.
Specification p-values are generated by comparing all of these
simulated test statistics against the null distribution
$\mathcal{N}(0,1)$.

\begin{figure}

\includegraphics[width=\maxwidth]{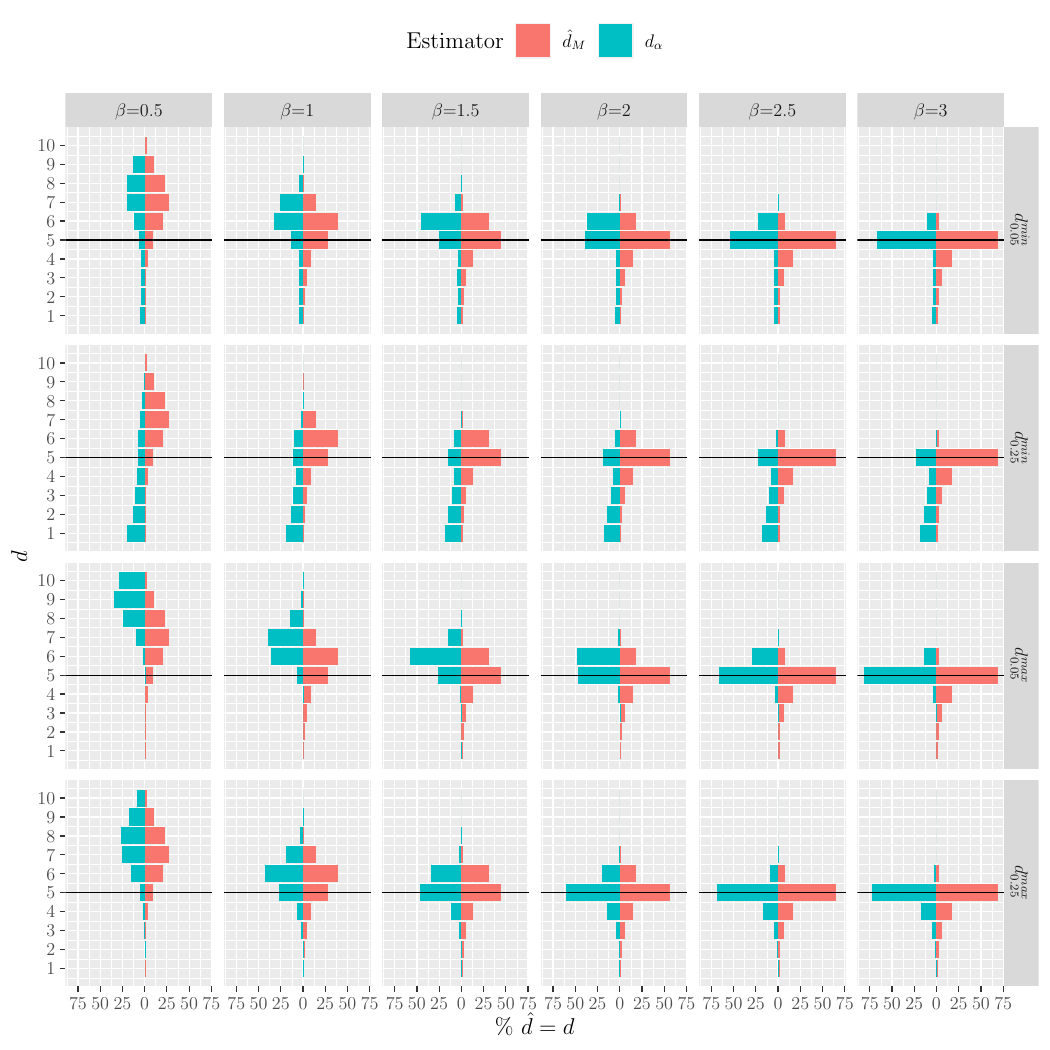}

\caption{Results from  $10^{4}$  simulation runs comparing $\dhatm$ to $\dhatU$
  and $\dhatB$ with $\alpha=0.05$ and $0.25$. Each row compares either
  $\dhatU$ or $\dhatB$ to the same set of $\dhatm$ estimates. Each bar
  represents the percent of runs in which an estimator selects each
  possible model, indexed as $d=1,\dots$,10. Model
  $d=$5 (indicated with a horizontal line) is the optimal
  model, with models $d>$5 misspecified, and models
  $d<$5 well-specified but suboptimal. }
\label{fig:simulation}
\end{figure}

\begin{table}

\begin{tabular}[t]{lrrrrrrrrr}
\toprule
\multicolumn{1}{c}{ } & \multicolumn{3}{c}{$\beta=0.5$} & \multicolumn{3}{c}{$\beta=2$} & \multicolumn{3}{c}{$\beta=3$} \\
\cmidrule(l{3pt}r{3pt}){2-4} \cmidrule(l{3pt}r{3pt}){5-7} \cmidrule(l{3pt}r{3pt}){8-10}
Estimator & RMSE & \%Opt. & \%$>d^*$ & RMSE & \%Opt. & \%$>d^*$ & RMSE & \%Opt. & \%$>d^*$\\
\midrule
$\dhatU_{0.05}$ & 15.3 & 0 & 100 & 0.6 & 47 & 50 & 0.2 & 82 & 14\\
$\dhatU_{0.25}$ & 8.5 & 5 & 93 & 0.6 & 60 & 20 & 0.6 & 73 & 3\\
$\dhatB_{0.05}$ & 6.3 & 6 & 65 & 1.8 & 39 & 37 & 1.4 & 67 & 11\\
$\dhatB_{0.25}$ & 5.4 & 7 & 16 & 4.7 & 19 & 5 & 4.7 & 23 & 1\\
$\dhatm$ & 6.1 & 9 & 84 & 1.1 & 56 & 18 & 1.0 & 69 & 3\\
\bottomrule
\end{tabular}

\caption{Some results from $10^{4}$ simulation runs comparing $\dhatm$ to $\dhatU$
  and $\dhatB$ with $\alpha=0.05$ and $0.25$. For $\beta=0.5,2,3$, the
  root-mean-squared error (RMSE) of each estimator
  $\left\{\overline{(\hat{d}-d^*)^2}\right\}^{1/2}$ and the percentages each
  estimator chose the optimal model (\%Opt.) or chose a misspecified
  model (\%$>d^*$)}
\label{tab:simulation}
\end{table}

Figure \ref{fig:simulation} and Table \ref{tab:simulation} give the results of the simulation study,
comparing $\dhatm$ to $\dhatB_{0.05}$, $\dhatB_{0.25}$, $\dhatU_{0.05}$
and $\dhatU_{0.25}$, respectively.
Table \ref{tab:simulation} compares all five model selectors at
$\beta=0.5$, 2, and 3 on three criterion: root mean-squared-error
($RMSE(x)=\left\{\overline{(x-d^*)^2}\right\}^{1/2}$), a measure of how close, in
general, the estimator is to the optimal value, the percentage
of runs in which it chose the optimal value $d^*$ (\%Opt.) and the
percentage of runs in which it chose a misspecified model, i.e. chose
$d>d^*$ (\%$>d^*$).

In Figure \ref{fig:simulation}, each bar represents the percentage of the times each model selector chose
model $d$, with $d=1,\dots$,10.
Model $d=$5 (indicated with a horizontal line) is the optimal
model, with models $d>$5 misspecified, and models
$d<$5 well-specified but suboptimal.
Each column of Figure \ref{fig:simulation} corresponds to a different
value for the slope parameter $\beta\in\{0.5,1,1.5,2,2.5,3\}$.
As $\beta$ increases, so does the power of the specification test,
allowing the test to reject misspecified models at smaller values of
$d>d^*$.
Each row compares the same set of $\dhatm$ to either $\dhatB_{0.05}$,
$\dhatB_{0.25}$, $\dhatU_{0.05}$, or $\dhatU_{0.25}$.

When $\beta=0.5$, the power to detect misspecification for models $d>d^*$
is relatively low.
$\dhatm$, $\dhatB_{0.05}$, and both $\dhatU$ model selectors tend to
choose models that are too big.
That said, of those four estimators, $\dhatm$ has the smallest root
mean-squared-error (RMSE) and
$\dhatm$ is most likely of all model selectors to choose the optimal
model $d^*$.
$\dhatB_{0.25}$, which is the least likely to
recommend a misspecified model, tends to recommend $d=1$, the smallest
possible, least-optimal model.

As $\beta$ increases, the performance of all five model selectors
improves.
Throughout, $\dhatm$ is competitive in all three criteria and,
arguably, balances them the best.
At $\beta=2$, and $\beta=3$, the $\dhatU$ estimators have better RMSE and tend to pick
the optimal model slightly more often than $\dhatm$, but are more
likely to pick misspecified models.
$\dhatB_{0.25}$ is the least likely to pick misspecified models, but
the models it does pick tend to be much too small.

In general, $\dhatm$ tends to be more conservative than $\dhatU$ or
$\dhatB$ with $\alpha=0.05$ but much less conservative than
$\dhatB_{0.25}$.
Its performance is most similar to $\dhatU_{0.25}$, while being
slightly more conservative.

The appendix gives a larger version of Table \ref{tab:simulation}
including results from when there are $D=20$ candidate models or
$D=10$, as here, and when the optimal model $d^*=2$, $d^*=5$ (as
here), and, when $D=20$, $d^*=10$.
Broadly speaking, the patterns of performance are similar as $D$ and
$d^*$ vary.

\section{Two Data Examples}\label{sec:examples}

\subsection{Lag Order in Autoregression Models: US Total Unemployment}\label{sec:unemployment}

Figure \ref{fig:example}B shows the natural logarithm of the United States total
unemployment rate from 1890 to 2016.
The data were combined from the ``Nelson \& Plosser extended data
set'' provided in the \texttt{urca} library in \texttt{R}
\citep{urca,Rcite}, which covers years 1890--1988, and a downloadable
dataset from the United States Bureau of Labor Statistics, itself
derived from the Current Population Survey, which covers years
1947--2015 \citep{cps}.
The two datasets agree on the overlapping years.

Assume that the time series follows an ``AR($d$)'' model; that is,
\begin{equation}\label{eq:arp}
unemp_t=\mu + \displaystyle\sum_{i=1}^d \phi_i unemp_{t-i}+\epsilon_t
\end{equation}
where $\mu$ and $\{\phi_i\}_{i=1}^d$ are parameters to be estimated
and $\epsilon_t$ is white noise.
In this model, the unemployment in one year is a function of
unemployment rates in the previous $d$ years, but conditionally
independent of even earlier measurements.

Having settled on model (\ref{eq:arp}), the analyst must choose $d$,
the lag order.
Sequential specification tests can be useful here \citep[e.g.][]{practitionersGuide}.
Consider the null hypothesis $H_d: \phi_i=0$ for all $i>d$;
a researcher could test a sequence of such null hypotheses, for a set
of plausible values of $d$, and choose the $d$ based on the results.
Other options for choosing $d$ include optimizing
information criteria
\citep{akaike1969fitting,schwarz1978estimating}.
For instance, choosing the model that minimizes \textsc{aic}, defined as
$2(d+2) -2log(\hat{L}_d)$, where $\hat{L}_d$ is the maximized likelihood
of the $AR(d)$ model, or \textsc{bic}, which is defined as $log(n)(d+2) -2log(\hat{L}_d)$.
A large literature surrounds this important question \citep[See,
e.g.][and the citations
therein]{mcquarrie1998regression,liew2004lag}. This section is not
meant as a complete treatment, or even an overview, of lag order
selection, but as an illustration of sequential specification tests in a well-known area.

Figure \ref{fig:tspvalues1} gives the p-values from a sequence
likelihood ratio tests, as described in \citet[][Ch.1]{urca}, which
discussed a similar dataset.
For each candidate lag order $d$, the likelihood ratio test compares
twice the log of the ratio of the likelihoods of $AR(d+1)$ and $AR(d)$ models
to a $\chi^2_1$ distribution.
If the $AR(d+1)$ model fits much better than the $AR(d)$ model, a lag
order of $d$ may not be sufficient.
The p-values follow a stark pattern: for $d<5$, they are close to
zero, while for $d\ge5$, they appear roughly uniformly distributed.

Table \ref{tab:ts}, and vertical lines in Figure \ref{fig:tspvalues1},
show the lag order choices from $\dalphaU$, $\dalphaB$, $\dhatm$, and
$\dhatmab$, which are based on the p-values, and the lag orders that
minimize \textsc{aic} and \textsc{bic}, based directly on the models' likelihood and numbers
of parameters.
Here, smaller models are preferable to larger models, so $\dstar$ is
the smallest acceptable value for $d$.

The change-point selectors $\dhatm$ and $\dhatmab$ both selected a
lag order of  5, consistent with the casual
observation that p-values for lags less than this value are very
small, while those greater appear approximately uniform.
Incidentally, the two information criteria considered, \textsc{aic} and \textsc{bic},
agreed with this choice, as did $\dhatU_{0.25}$.
In contrast, $\dhatU_{0.05}$ chose a smaller lag order of
3, because the corresponding p-value of
0.066 slightly exceeds the threshold of 0.05.

At the other extreme, the $\dalphaB$ selectors both chose very large
models with
$d=$17 and 19, due to the presence of
of small p-values of
0.044 and
0.198 at
$d=$16 and 18.

This example illustrates how considering the entire distribution of
p-values, as $\dhatm$ does, can lead to better model selection than
considering only the small (as in $\dalphaB$) or large ($\dalphaU$)
values.

\begin{figure}

\includegraphics[width=\maxwidth]{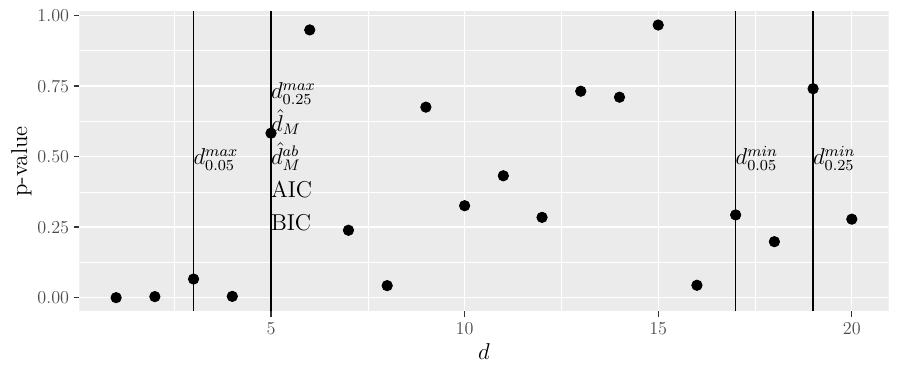}

\caption{P-values from likelihood-ratio tests of model fit,
  comparing models AR($d$) with AR($d+1$) in the annual total US
  unemployment rate (logged) time series.}
\label{fig:tspvalues1}
\end{figure}


\begin{table}[ht]
\centering
\begin{tabular}{ccccccccc}
  \hline
 & $\dhatU_{0.05}$ & $\dhatU_{0.25}$ & $\dhatB_{0.05}$ & $\dhatB_{0.25}$ & $\dhatm$ & $\dhatmab$ & AIC & BIC \\
  \hline
Lag Order &   3 &   5 &  17 &  19 &   5 &   5 &   5 &   5 \\
   \hline
\end{tabular}
\caption{Lag order selections for an $AR(d)$ model of the US unemployment time series.}
\label{tab:ts}
\end{table}

\subsection{Sequential specification tests in Regression Discontinuity Bandwidth Selection:
  Estimating the Effect of  Academic Probation on College GPAs}\label{sec:rdd}

At many universities, students who fail to achieve a minimum GPA $c$ are
put on academic probation.
\citet{lso} recognized that academic probation can form a regression discontinuity
design, in which treatment is a function of a ``running
variable'' with a pre-determined cutoff.
Specifically, probation $Z$ is a function of
a ``running variable'' $R$, students' GPAs: students with $R<c$ are
put on probation---$Z=1$---and students with $R>c$ are not, $Z=0$.
That being the case, students with GPAs just below $c$ may be
comparable to students with GPAs just above $c$, so comparing these two
sets of students allows researchers to estimate the effect of probation on
outcomes $Y$ (perhaps after adjusting for $Y$'s relationship with $R$).
The challenge becomes defining ``just above'' and ``just below''---that
is, selecting a ``bandwidth'' $\omega^*>0$ such that subjects $i$ with $R_i\in
(c-\omega^*,c)$ are suitably comparable to subjects with $R_i\in(c,c+\omega^*)$.

A number of authors \citep[e.g.]{lee,cft,mattai}
recommend sequential specification tests, using baseline covariates
$X$, as part of the procedure for choosing $\omega$.
At a sequence of candidate bandwidths $0<\omega_1<\dots<\omega_d<\dots<\omega_D$,
they recommend testing the equality of covariate means (again, perhaps
after adjusting for $R$) between subjects with $R_i\in
(c-\omega_d,c)$ and those with $R_i\in(c,c+\omega_d)$, and choosing a
bandwidth $\omega^*=\omega_{d^*}$.
These are essentially placebo tests---since the treatment cannot
affect baseline covariates, differences in covariate means between
treated and untreated subjects must be an indicator of incomparability
between the groups, or model misspecification.

In a secondary analysis of the academic probation dataset, \citet{lrd}
chose an RDD bandwidth using a set of seven baseline
covariates: students' high-school GPA (expressed in percentiles), age
at college matriculation, number of attempted credits, gender, native
language (English or other), birth place (North America or other) and
university campus (the university consisted of three campuses).
For each covariate $X_k$ and for each candidate bandwidth $\omega_d$, they
let $p_{kd}$ be the p-value corresponding the coefficient on $Z$ from
the regression of $X_k$ on $R$ and $Z$, fit to the subset of students
with $R\in (c-\omega_d,c+\omega_d)$.
These regression models were linear for continuous covariates and
logistic for binary covariates, with heteroskedasticity-consistent sandwich standard errors \citep{sandwich1,sandwich2,sandwich3}.
Then, the omnibus specification p-value for bandwidth $\omega_d$ was
$p_d= min\{1,7p_{1d},\dots,7p_{7d}\}$, the minimum of the
Bonferroni-adjusted p-values $p_{kd}$.

\begin{figure}
\begin{knitrout}
\definecolor{shadecolor}{rgb}{0.969, 0.969, 0.969}\color{fgcolor}
\includegraphics[width=\maxwidth]{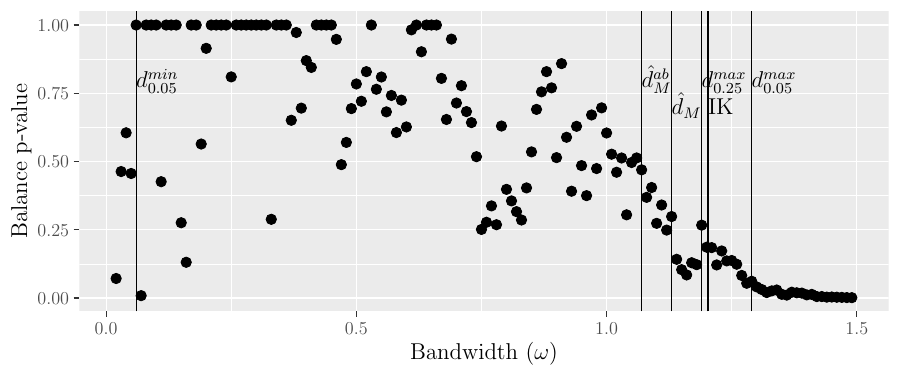}
\end{knitrout}
\caption{P-values for balance in all seven covariates
  from the \citet{lso} analysis, following the method in \citet{lrd}. Vertical lines denote bandwidth choices using
  different criteria.}
\label{fig:rdpvalues1}
\end{figure}

The resulting p-values are plotted in  Figure \ref{fig:rdpvalues1},
with bandwidth selections corresponding to
$\dhatU_{0.05}$, $\dhatU_{0.25}$, $\dhatB_{0.05}$, $\dhatm$, and $\dhatmab$.
Also plotted is the more conventional bandwidth recommended by
\citet{IK}, denoted IK,
which is based on non-parametric estimates of the curvature of
the regression function of $Y$ on $R$, rather than covariate placebo
tests.
These bandwidth selections are also listed in Table \ref{tab:RDD}.

For most small bandwidths $\omega_d$, $p_d$ is fairly large, and in many
cases equal to 1.
This apparent super-uniform distribution is probably due to the
conservative Bonferroni correction applied to the p-values from
individual covariates.
On the other hand, at the smallest candidate bandwidth
$\omega_1=$0.02, the p-value is $p_1=$%
0.072, and the p-value at the
6th bandwidth, $\omega_{6}=0.07$,
another small bandwidth, is
$p_{6}=0.009$.
After around $\omega=0.75$, the p-values begin decreasing, until by
$\omega=1.5$, the p-values are all close to zero.

\begin{table}[ht]
\centering
\begin{tabular}{rccc}
  \hline
 & $\hat{d}$ & Bandwidth & Effect (95\% CI) \\
  \hline
$\dhatU_{0.05}$ & 128 & 1.29 & 0.22 (0.18,0.26) \\
  $\dhatU_{0.25}$ & 118 & 1.19 & 0.23 (0.18,0.27) \\
  $\dhatB_{0.05}$ & 5 & 0.06 & 0.1 (-0.17,0.37) \\
  $\dhatB_{0.25}$ &  & N/A & N/A \\
  $\dhatm$ & 112 & 1.13 & 0.23 (0.19,0.27) \\
  $\dhatmab$ & 106 & 1.07 & 0.22 (0.18,0.27) \\
  IK & 119 & 1.2 & 0.23 (0.19,0.28) \\
   \hline
\end{tabular}
\caption{Selected regression discontinuity bandwidths (``$\hat{d}$'' is the point in the sequence selected, and ``Bandwidth'' is the actual bandwidth) using covariate balance tests, or using the method described in \cite{IK}, along with their associated estimates for the average treatment effect of academic probation on subsequent GPAs (ATE), with 95\% confidence intervals in parentheses.}
\label{tab:RDD}
\end{table}

Since the p-value at the smallest candidate bandwidth,
$p_1=0.072<0.25$, the model selector
$\dhatB_{0.25}$ does not select anything---there is no $d'$ small
enough so that $p_{d}<0.25$ for all $d\le d'$.
Similarly, the very low p-value at the 6th bandwidth
causes $\dhatB_{0.05}$ to select a relatively small bandwidth of 0.07.
This illustrates the sensitivity of $\dhatB$ to outlier p-values at
small $d$.

The remaining selectors all recommend bandwidths greater than 1,
ranging from
$\dhatmab$, which recommends bandwidth $\omega_{\dhatmab}$=1.07 to
$\dhatU_{0.05}$, which recommends bandwidth
$\omega_{\dhatU_{0.05}}=$1.29.
As in the simulation and the unemployment example, $\dhatm$ and
$\dhatU_{0.25}$ are quite close to each other.
The similarity of the IK bandwidth of $1.2$
to the bandwidths selected by $\dhatU$ and $\dhatm$ suggests an
encouraging agreement, in this example, between covariate-based
bandwidth selection and the more conventional RDD approach.

It is worth noting that super-uniformity of the p-values for small
bandwidths inflates the sums $\sum_{t\le d} (p_t-1/4)$ in
\eqref{eq:dhatm}.
Therefore, in this case $\dhatmab$, which relies less on the uniform
model for p-values under $H_0$, may be a more appropriate choice
than $\dhatm$.

Table \ref{tab:RDD} also lists estimated treatment effects of academic
probation on students' subsequent GPAs, along with 95\% confidence
intervals.
The effects were estimated following the method described in
\citet{lrd}, with the exception of the estimate for the IK bandwidth
which used local linear regression, as implemented in the \texttt{R}
package \texttt{rdd} \citep{rdd}.
With the exception of the effect corresponding to $\dhatB_{0.05}$, all
estimated effects are roughly equal, slightly less than 1/4 of a grade point.
Actually, the confidence interval corresponding to the $\dhatB_{0.05}$
bandwidth,
(-0.17,0.37)
is wide enough to contain both a negative academic probation effect
along with all of the other estimated effects and confidence intervals.
The conservativism of $\dhatB_{0.05}$ prevents efficient effect
estimation; the conservativism of $\dhatB_{0.25}$ prevents
estimation altogether.

\section{Discussion}\label{sec:discussion}

As long as data analysts use specification tests and p-values to
select their models, decision rules translating a sequence of p-values to a
model choice will be necessary.
Currently, the most common approach compares the p-values to a
pre-specified threshold.
This approach turns the logic of null
hypothesis testing on its head, using p-values to identify
well-specified models---i.e. true null hypotheses---rather than to
reject misspecified models.
Moreover, the $\dhatB_{\alpha}$ approach, by failing to control the
familywise type-I error rate, can be extremely conservative,
for instance recommending very high lag order in the unemployment
example of Section \ref{sec:unemployment} and very small bandwidths
(or none at all) in the academic probation example of Section \ref{sec:rdd}.
In contrast, $\dhatU_{\alpha}$ does control familywise type-I error rates.
However, both threshold-based approaches, $\dhatB$ and $\dhatU$,
require specifying a threshold, and there is rarely any clear
guidance on how to do so.

The alternatives introduced here, $\dhatm$ and $\dhatmab$, drawn from
the change-point literature, skirt these issues entirely.
Rather than using p-values to reject (or accept) null hypotheses, they
examine the full distribution of p-values.
They require no arbitrary threshold to be specified.
As shown in the simulation study and the two data examples, they tend
to avoid the conservativism and outlier sensitivity of $\dhatB_{0.25}$
and the anti-conservativism of $\dhatU_{0.05}$ (indeed, Proposition
\ref{prop:conservative} states that $\dhatm$ is asymptotically conservative).

Actually, the simulation results and examples show that $\dhatm$ tends
to agree with $\dhatU_{0.25}$, which itself performs rather well.
This suggests that $\dhatU$ with a default threshold of 0.25 may be a
good option for data analysts who wish to continue using
threshold-based approaches.

There are several open questions regarding $\dhatm$'s behavior and
use.
First, it is unclear whether or when the more flexible version
$\dhatmab$ should be preferred to $\dhatm$; there is good reason to
expect it to perform better when sample sizes are small, but is there
a cost associated with using $\dhatmab$ in larger samples?
Further, there may be ways to construct sequential specification tests
in a way that improves $\dhatm$'s performance.
How to best construct specification tests for different model
selectors is a topic for future research.

Ultimately, the goal of model selection is to produce parameter
estimates or predictions with desired properties.
Ideally, researchers would select a model with this end in mind;
however, the effect of model selection on final estimates or
predictions depends heavily on specific circumstances.
That said, a careful study of the effect of $\dhatm$ on estimates and
predictions in a wide range of cases could be useful.

Model selectors $\dhatm$ and $\dhatmab$ may be particularly useful in
measurement modeling, where model choice based on sequences of
p-values are common (such as to select the number of components in factor or
latent class analysis).
The encouraging performance of $\dhatm$ when $d^*=2$ suggests that
$\dhatm$ may be appropriate even when the true number of components is
small.

\bibliographystyle{apalike}
\bibliography{sst}

\clearpage

\appendix
\section{Simulation Results with different $D$ and $d^*$}

Table \ref{tab:fullSimulation} gives the results of Table
\ref{tab:simulation} for varying values of $D$ (the total number of
candidate models) and $d^*$ (the optimal model).

\begin{table}[!h]

\begin{tabular}[t]{rrlrrrrrrrrr}
\toprule
\multicolumn{3}{c}{ } & \multicolumn{3}{c}{$\beta=0.5$} & \multicolumn{3}{c}{$\beta=2$} & \multicolumn{3}{c}{$\beta=3$} \\
\cmidrule(l{3pt}r{3pt}){4-6} \cmidrule(l{3pt}r{3pt}){7-9} \cmidrule(l{3pt}r{3pt}){10-12}
\rotatebox{45}{$D$} & \rotatebox{45}{$d^*$} & \rotatebox{45}{Est.} & \rotatebox{45}{RMSE} & \rotatebox{45}{\%Opt.} & \rotatebox{45}{\%$>d^*$} & \rotatebox{45}{RMSE} & \rotatebox{45}{\%Opt.} & \rotatebox{45}{\%$>d^*$} & \rotatebox{45}{RMSE} & \rotatebox{45}{\%Opt.} & \rotatebox{45}{\%$>d^*$}\\
\midrule
 &  & $\dhatU_{0.05}$ & 21 & 0.1 & 100 & 1 & 48.6 & 49 & 0 & 81 & 14.7\\
\cmidrule{3-12}
 &  & $\dhatU_{0.25}$ & 10 & 4.5 & 94 & 0 & 60.4 & 19 & 0 & 73 & 3.1\\
\cmidrule{3-12}
 &  & $\dhatB_{0.05}$ & 8 & 7.2 & 83 & 1 & 47.6 & 43 & 0 & 77 & 13.2\\
\cmidrule{3-12}
 &  & $\dhatB_{0.25}$ & 2 & 17.7 & 39 & 0 & 46.4 & 11 & 0 & 55 & 1.8\\
\cmidrule{3-12}
 & \multirow{-5}{*}{\raggedleft\arraybackslash 2} & $\dhatm$ & 6 & 9.8 & 85 & 0 & 60.5 & 18 & 0 & 73 & 2.7\\
\cmidrule{2-12}
 &  & $\dhatU_{0.05}$ & 15 & 0.2 & 100 & 1 & 47.3 & 50 & 0 & 82 & 14.2\\
\cmidrule{3-12}
 &  & $\dhatU_{0.25}$ & 9 & 4.9 & 93 & 1 & 60.0 & 20 & 1 & 73 & 3.0\\
\cmidrule{3-12}
 &  & $\dhatB_{0.05}$ & 6 & 6.0 & 65 & 2 & 39.2 & 37 & 1 & 67 & 11.1\\
\cmidrule{3-12}
 &  & $\dhatB_{0.25}$ & 5 & 7.1 & 16 & 5 & 19.1 & 5 & 5 & 23 & 0.7\\
\cmidrule{3-12}
\multirow{-10}{*}{\raggedleft\arraybackslash 10} & \multirow{-5}{*}{\raggedleft\arraybackslash 5} & $\dhatm$ & 6 & 8.9 & 84 & 1 & 56.4 & 18 & 1 & 69 & 2.7\\
\cmidrule{1-12}
 &  & $\dhatU_{0.05}$ & 22 & 0.1 & 100 & 1 & 48.9 & 48 & 0 & 81 & 15.0\\
\cmidrule{3-12}
 &  & $\dhatU_{0.25}$ & 10 & 5.1 & 93 & 0 & 61.0 & 19 & 0 & 72 & 3.5\\
\cmidrule{3-12}
 &  & $\dhatB_{0.05}$ & 7 & 7.3 & 83 & 1 & 47.4 & 43 & 0 & 77 & 13.5\\
\cmidrule{3-12}
 &  & $\dhatB_{0.25}$ & 2 & 17.5 & 39 & 0 & 45.8 & 11 & 0 & 54 & 1.9\\
\cmidrule{3-12}
 & \multirow{-5}{*}{\raggedleft\arraybackslash 2} & $\dhatm$ & 6 & 10.0 & 84 & 0 & 61.1 & 18 & 0 & 72 & 3.1\\
\cmidrule{2-12}
 &  & $\dhatU_{0.05}$ & 22 & 0.1 & 100 & 1 & 47.8 & 50 & 0 & 81 & 14.5\\
\cmidrule{3-12}
 &  & $\dhatU_{0.25}$ & 10 & 4.6 & 94 & 1 & 60.1 & 20 & 1 & 73 & 3.2\\
\cmidrule{3-12}
 &  & $\dhatB_{0.05}$ & 8 & 6.1 & 72 & 2 & 39.3 & 38 & 1 & 67 & 11.4\\
\cmidrule{3-12}
 &  & $\dhatB_{0.25}$ & 5 & 7.3 & 17 & 5 & 19.2 & 5 & 5 & 23 & 0.8\\
\cmidrule{3-12}
 & \multirow{-5}{*}{\raggedleft\arraybackslash 5} & $\dhatm$ & 6 & 8.9 & 84 & 1 & 56.4 & 18 & 1 & 69 & 2.6\\
\cmidrule{2-12}
 &  & $\dhatU_{0.05}$ & 22 & 0.1 & 100 & 1 & 47.8 & 50 & 0 & 81 & 14.8\\
\cmidrule{3-12}
 &  & $\dhatU_{0.25}$ & 10 & 4.8 & 94 & 1 & 60.1 & 20 & 1 & 73 & 2.9\\
\cmidrule{3-12}
 &  & $\dhatB_{0.05}$ & 18 & 4.7 & 54 & 13 & 31.1 & 29 & 12 & 51 & 8.6\\
\cmidrule{3-12}
 &  & $\dhatB_{0.25}$ & 35 & 1.9 & 4 & 35 & 4.5 & 1 & 34 & 6 & 0.2\\
\cmidrule{3-12}
\multirow{-15}{*}{\raggedleft\arraybackslash 20} & \multirow{-5}{*}{\raggedleft\arraybackslash 10} & $\dhatm$ & 6 & 9.4 & 84 & 2 & 56.4 & 17 & 2 & 68 & 2.5\\
\bottomrule
\end{tabular}

\caption{Simulation runs with the total number of models to compare
  $D\in\{10,20\}$ and the optimal model $d^*\in\{2,5,10\}$ comparing $\dhatm$ to $\dhatU$
  and $\dhatB$ with $\alpha=0.05$ and $0.25$. For $b=0.5,2,3$, the
  root-mean-squared error (RMSE) of each estimator
  $\overline{(\hat{d}-d^*)^2}^{0.5}$ and the percentages each
  estimator chose the optimal model (\%Opt.) or chose a misspecified
  model (\%$>d^*$)}
\label{tab:fullSimulation}
\end{table}

\end{document}